\newtheorem{theorem}{Theorem}
\newtheorem{lemma}[theorem]{Lemma}
\theoremstyle{definition}
\newtheorem{definition}[theorem]{Definition}
\newtheorem{remark}[theorem]{Remark}
\newtheorem{example}[theorem]{Example}
\title{Multi-dimensional sets recognizable in all abstract numeration systems}
\author{\'Emilie Charlier, Anne Lacroix, Narad Rampersad}
\address{Department of Mathematics \\
  University of Li{\`e}ge \\
  Grande traverse 12 (B37) \\
  B-4000 Li{\`e}ge \\
  Belgium
}
\email[\'E.~Charlier]{echarlier@ulg.ac.be}
\email[A.~Lacroix]{A.Lacroix@ulg.ac.be}
\email[N.~Rampersad]{nrampersad@ulg.ac.be}
\DeclareMathOperator{\rep}{rep}
\DeclareMathOperator{\val}{val}
\DeclareMathOperator{\N}{\mathbb N}
\begin{document}
\begin{abstract}
We prove that the subsets of $\N^d$ that are $S$-recognizable
for all abstract numeration systems $S$ are exactly the
$1$-recognizable sets.  This generalizes a result of Lecomte and Rigo
in the one-dimensional setting.
\end{abstract}
\maketitle

\section{Introduction}
In this paper we characterize the subsets of $\N^d$ that are
simultaneously recognizable in all abstract numeration systems
(numeration systems that represent a natural number $n$ by the
$(n+1)$-th word of a genealogically ordered regular language---see
below for the precise definition).  Lecomte and Rigo \cite{LR01}
provided such a characterization for the case $d=1$ based on the
well-known correspondence between unary regular languages and
ultimately periodic subsets of $\N$.  When $d>1$ we no longer have
such a nice correspondence and the situation becomes somewhat more
complicated.  To obtain our characterization we instead use a
classical decomposition theorem due to Eilenberg, Elgot, and
Shepherson \cite{EES69}.  The motivation for studying such sets comes
from the well-known result of Cobham (and its multi-dimensional
generalization due to Semenov) concerning the sets recognizable in
integer bases.

Let $k \geq 2$ be an integer.  A set $X \subseteq \N$ is
\emph{$k$-recognizable} (or \emph{$k$-automatic}) if the language
consisting of the base-$k$ representations of the elements of $X$ is
accepted by a finite automaton.  A celebrated result of Cobham
\cite{Cob69} characterizes the sets that are recognizable in all
integer bases $k \geq 2$.

\begin{theorem}[Cobham]\label{cobham}
Let $k,\ell \geq 2$ be two multiplicatively independent integers and
let $X \subseteq \N$.  The set $X$ is both $k$-recognizable and
$\ell$-recognizable if and only if it is ultimately periodic.
\end{theorem}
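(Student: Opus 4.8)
\noindent\emph{Proof proposal.}
The backward implication needs no new idea: if $X$ coincides, beyond some threshold $N$, with a set of period $p$, then for every base $b\ge 2$ a deterministic automaton reading base-$b$ expansions can carry the residue modulo $p$ of the value already read, together with a bounded table to deal with the finitely many elements below $N$; this shows at once that $X$ is $b$-recognizable for all $b$, in particular for $b=k$ and $b=\ell$. The whole content therefore lies in the forward implication, which I would establish by contraposition: assuming $X$ is simultaneously $k$-recognizable and $\ell$-recognizable but \emph{not} ultimately periodic, I would derive a contradiction.

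The strategy is to extract from each recognizability hypothesis a ``self-similarity'' property of the characteristic sequence $\mathbf{1}_X$, and then to show that these two properties cannot coexist at incommensurable scales. First I would invoke the standard reformulation of $k$-recognizability through finiteness of the \emph{$k$-kernel}: $X$ is $k$-recognizable precisely when the family of sequences $n\mapsto \mathbf{1}_X(k^p n+r)$, indexed by $p\ge 0$ and $0\le r<k^p$, is finite, and likewise with $\ell$ in place of $k$. Pictorially, on a block of integers of the form $[rk^p,(r+1)k^p)$ the set $X$ looks, after rescaling by the factor $k^p$, like one of finitely many fixed ``patterns'', and symmetrically for blocks aligned to powers of $\ell$. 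Next I would bring in the arithmetic hypothesis: $k$ and $\ell$ are multiplicatively independent if and only if $\log k/\log\ell$ is irrational, which by Kronecker's (or Weyl's) theorem is equivalent to the density of $\{k^p\ell^{-q}:p,q\ge 0\}$ in the positive reals; hence one may choose exponents $p$ and $q$, both as large as desired, with $k^p$ and $\ell^q$ of essentially the same magnitude.

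The crux is then a combinatorial reconciliation of these two self-similar structures. Concretely, I would analyse the run-length function of $\mathbf{1}_X$ --- the length of the maximal constant block beginning at a given position --- and argue that if it were unbounded, the finitely many $k$-patterns and the finitely many $\ell$-patterns would be forced to agree on ever longer, suitably shifted windows; using the near-equality $k^p\approx\ell^q$ to overlay a long $k$-aligned block with a long $\ell$-aligned block then yields a genuine contradiction with non-periodicity, while the complementary case of bounded run lengths reduces, via finiteness of the minimal base-$k$ automaton restricted to a single residue class, to $X$ being ultimately periodic after all. I expect this last step to be the real obstacle, and indeed it is the heart of Cobham's theorem: each self-similarity hypothesis taken alone is far too weak to force periodicity --- for instance the set of integers whose base-$k$ expansion begins with the digit $1$ is $k$-recognizable and not ultimately periodic --- so one must use the \emph{incommensurability} of the two scales quantitatively, through delicate estimates controlling how the two finite families of patterns can interleave. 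Since Theorem~\ref{cobham} enters this paper only as motivation, I would at this point simply appeal to Cobham's original argument \cite{Cob69} (or to one of the later, more transparent proofs, e.g.\ via return words and $S$-adic representations) rather than reproduce the details here.
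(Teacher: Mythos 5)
The paper does not prove this theorem at all: it is quoted as classical background, with the proof deferred entirely to Cobham's original article \cite{Cob69}, so there is no argument of the authors' to compare yours against. Your treatment of the backward direction (an automaton tracking the value read so far modulo the period $p$, plus a finite table for the pre-periodic part) is correct and standard, and your decision to cite \cite{Cob69} for the forward direction puts you in exactly the same position as the paper. Do be aware, though, that the middle portion of your sketch --- the ``combinatorial reconciliation'' of the two kernels via run lengths and the near-equality $k^p\approx\ell^q$ --- is only a plausible gesture at the known proofs, not an argument that would survive on its own (in particular, the claim that the bounded-run-length case ``reduces to $X$ being ultimately periodic'' via the minimal automaton on a residue class is not a step of any standard proof; the usual route first establishes that a non-ultimately-periodic $k$- and $\ell$-recognizable set must be syndetic, and then bootstraps to periodicity). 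Since you explicitly decline to supply those details and fall back on the reference, your proposal is acceptable in precisely the sense that the paper's own citation is.
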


Two numbers $k$ and $\ell$ are \emph{multiplicatively independent} if
$k^m = \ell^n$ implies $m=n=0$.  A subset of the integers is
\emph{ultimately periodic} if it is a finite union of arithmetic
progressions.  We say that a set $X \subseteq \N$ is
\emph{$1$-recognizable} if the language $\{a^n : n \in X\}$ consisting
of the unary representations of the elements of $X$ is accepted by a
finite automaton.  It is well-known
\cite[Proposition~V.1.1]{Eilenberg} that a set is
$1$-recognizable if and only if it is ultimately periodic.

Lecomte and Rigo \cite{LR01} introduced the following generalization
of the standard integer base numeration systems.

\begin{definition}
An \emph{abstract numeration system} is a triple $S = (L, \Sigma, <)$
where $L$ is an infinite regular language over a totally ordered
finite alphabet $(\Sigma, <)$.  The map $\rep_S : \N \to L$ is
a bijection mapping $n \in \N$ to the $(n+1)$-th word of $L$
ordered genealogically.  The inverse map is denoted by $\val_S : L \to
\N$.
\end{definition}

Lecomte and Rigo \cite{LR01} proved that any ultimately periodic set is
$S$-recognizable for any abstract numeration system $S$.  Suppose on
the other hand that $X \subseteq \N$ is $S$-recognizable for every
abstract numeration system $S$.  Then in particular, the set $X$ must
be $1$-recognizable, and hence must be ultimately periodic.  We
therefore have the following characterization of the sets that are
recognizable in all abstract numeration systems.

\begin{theorem}[Lecomte and Rigo]\label{lec_rig}
A set $X \subseteq \N$ is $S$-recognizable for all
abstract numeration systems $S$ if and only if it is ultimately
periodic.
\end{theorem}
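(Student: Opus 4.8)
The plan is to prove the two implications separately; in both directions the substance is already available to us. For ``ultimately periodic $\Rightarrow$ $S$-recognizable for every $S$'', I would simply invoke the theorem of Lecomte and Rigo recalled above, according to which every ultimately periodic subset of $\N$ is $S$-recognizable in every abstract numeration system $S$. Were one to reprove this, the idea is, for a fixed arithmetic progression, to build an automaton that reads a word $w \in L$ and reconstructs the residue modulo the period of $\val_S(w)$, exploiting the fact that for a regular language the number of words of each length is itself computable by a finite automaton; but here I would just cite \cite{LR01}.

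For the converse, suppose $X \subseteq \N$ is $S$-recognizable for every abstract numeration system $S$. The key step is to notice that the unary system $S_1 = (a^*, \{a\}, <)$ is itself an abstract numeration system: the language $a^*$ is infinite and regular over the one-letter alphabet $\{a\}$, which is (trivially) totally ordered. Under $S_1$ the genealogical order on $a^*$ is just order by length, so $\rep_{S_1}(n) = a^n$ for every $n \in \N$; hence ``$X$ is $S_1$-recognizable'' says exactly that the language $\{a^n : n \in X\}$ is accepted by a finite automaton, i.e.\ that $X$ is $1$-recognizable. Specializing the hypothesis to $S = S_1$ therefore forces $X$ to be $1$-recognizable, and the classical equivalence between $1$-recognizability and ultimate periodicity (\cite[Proposition~V.1.1]{Eilenberg}, recalled above) completes the argument.

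I do not expect any genuine obstacle: the theorem merely packages (i) the cited Lecomte--Rigo construction and (ii) the elementary remark that the unary numeration system is a particular abstract numeration system whose recognizable sets coincide with the $1$-recognizable ones. The only point that needs a moment's verification is that $S_1$ really satisfies the definition of an abstract numeration system---$a^*$ infinite and regular, its genealogical enumeration returning $a^n$ as the $(n+1)$-th word---which is immediate.
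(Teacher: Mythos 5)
Your proposal matches the paper's argument exactly: the forward direction is the cited result of Lecomte and Rigo that ultimately periodic sets are $S$-recognizable for every abstract numeration system, and the converse specializes the hypothesis to the unary system $(a^*,\{a\},<)$ and invokes the classical equivalence between $1$-recognizability and ultimate periodicity. There is nothing to add; the argument is correct and is the same one the paper gives in the paragraph preceding the theorem.
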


Rigo and Maes \cite{RM02} considered $S$-recognizability in a
multi-dimensional setting.  This concept was further studied by
Charlier, K\"arki, and Rigo \cite{CKR10}.
For the formal definitions we need to introduce the following
``padding'' function.

\begin{definition}
If $w_1,\ldots, w_d$ are finite words over the alphabet $\Sigma$, the
padding map 
\[
(\cdot)^\# : (\Sigma^*)^d \to ((\Sigma \cup \{\#\})^d)^*
\]
is defined by
\[
(w_1,\ldots,w_d)^\# := (w_1\#^{m-|w_1|},\ldots,w_d\#^{m-|w_d|})
\]
where $m = \max\{|w_1|,\ldots, |w_d|\}$.  Here we write $(ac,bd)$ to
denote the concatenation $(a,b)(c,d)$.

If $R \subseteq (\Sigma^*)^d$, then
\[
R^{\#} = \{(w_1,\ldots,w_d)^\# \colon (w_1,\ldots,w_d) \in R\}.
\]
Note that $R$ is not necessarily a language, whereas $R^{\#}$ is; that
is, the set $R$ consists of $d$-tuples of words over $\Sigma$, whereas
$R^{\#}$ consists of words over the alphabet $(\Sigma \cup \{\#\})^d$.
\end{definition}

\begin{definition}
Let $S = (L, \Sigma, <)$ be an abstract numeration system.  Let $X
\subseteq \N^d$.  The set $X$ is \emph{$S$-recognizable}
(or \emph{$S$-automatic}) if the language $\rep_S(X)^\#$ is regular,
where
\[
\rep_S(X) = \{(\rep_S(n_1),\ldots,\rep_S(n_d)) \colon (n_1,\ldots,n_d)
\in X\}.
\]
Let $k \geq 2$ be an integer.  The set $X$ is \emph{$k$-recognizable} (or
\emph{$k$-automatic}) if it
is $S$-recognizable for the abstract numeration system $S$ built on
the language consisting of the base-$k$ representations of the
elements of $X$.
In particular, the set $X$ is \emph{$1$-recognizable} (or \emph{$1$-automatic}) if it
is $S$-recognizable for the abstract numeration system $S$ built on $a^*$.
\end{definition}

In order to have a multi-dimensional analogue of Cobham's theorem, we
need an analogous notion of ultimate periodicity in the
multi-dimensional setting.  In view of Theorem \ref{cob_sem} below,
the correct generalization turns out to be the following.

\begin{definition}
A set $X \subseteq \N^d$ is \emph{linear} if there exists $v_0,v_1,
\cdots, v_t \in \N^d$ such that
\[
X = \{v_0 + n_1v_1 + n_2v_2 + \cdots + n_tv_t : n_1, \ldots, n_t \in
\N\}.
\]
A set $X \subseteq \N^d$ is \emph{semi-linear} if it is a finite union
of linear sets.
\end{definition}

For more on semi-linear sets see \cite{GS66}.  We can now state the
multi-dimensional version of Cobham's theorem \cite{Sem77}.

\begin{theorem}[Cobham--Semenov]\label{cob_sem}
Let $k,\ell \geq 2$ be two multiplicatively independent integers and
let $X \subseteq \N^d$.  The set $X$ is both $k$-recognizable and
$\ell$-recognizable if and only if it is semi-linear.
 \end{theorem}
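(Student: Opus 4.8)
The backward implication is the easy one. The plan is to use the Ginsburg--Spanier identification of semi-linear sets with Presburger-definable subsets of $\N^d$ (see \cite{GS66}), together with the fact that every Presburger-definable set is $k$-recognizable for every integer $k \ge 2$. The latter is a routine build-up from closure properties: the graph of addition $\{(x,y,z) : x+y = z\}$ is $k$-recognizable via a carry-propagating automaton, constants and the order are trivially $k$-recognizable, and the class of $k$-recognizable sets is closed under Boolean operations, Cartesian products, coordinate permutations, and projection (all of this compatible with the padding convention). Assembling an arbitrary Presburger formula from these pieces shows that every finite union of linear sets is $k$-recognizable for every $k \ge 2$; in particular such a set is simultaneously $k$- and $\ell$-recognizable, which is the ``if'' part.

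For the ``only if'' part the plan is to induct on the dimension $d$. The base case $d = 1$ is exactly Cobham's theorem (Theorem~\ref{cobham}) together with the fact, already recalled above, that a subset of $\N$ is ultimately periodic precisely when it is semi-linear. So assume $d \ge 2$, that the result holds in all dimensions $< d$, and that $X \subseteq \N^d$ is both $k$- and $\ell$-recognizable with $k$ and $\ell$ multiplicatively independent. First I would apply an Eilenberg--Elgot--Shepherson style decomposition \cite{EES69} to the $k$-recognizable set $X$: partition $\N^d$ into finitely many pieces according to which coordinates carry the longest base-$k$ representation and by how much the remaining lengths lag behind. On each piece the automaton for $\rep_k(X)^\#$ factors an input into a common initial block followed by tails over smaller alphabets, and from this one reads a description of $X$ on that piece in terms of $k$-recognizable subsets of $\N^{d-1}$ obtained by projections and sections. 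Since projections and sections preserve recognizability, these lower-dimensional sets are simultaneously $k$- and $\ell$-recognizable, hence semi-linear by the induction hypothesis; the pieces of the partition, being cut out by linear comparisons of representation lengths and again describable in both bases, are semi-linear as well. Gluing the finitely many pieces back together then exhibits $X$ as semi-linear.

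The hard part is this last gluing step, which is the technical core of Semenov's theorem \cite{Sem77}. Knowing that each individual section or projection of $X$ is semi-linear does not by itself yield that $X$ is semi-linear: one needs the semi-linear descriptions of the sections to be \emph{uniform} as the fixed coordinates range over a semi-linear family --- a common bound on the number of generators together with at most polynomial growth of the constants and the periods. Extracting such uniform bounds is exactly where the combinatorics of the two finite automata recognizing $X$ in bases $k$ and $\ell$ is brought in, together with the multiplicative independence of $k$ and $\ell$ through a Cobham-type density estimate relating the lengths of base-$k$ and base-$\ell$ representations. Since a fully detailed argument is long and tangential to the goals of the present paper, we take Theorem~\ref{cob_sem} as known and refer to \cite{Sem77} for the proof.
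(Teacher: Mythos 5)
The paper does not prove Theorem~\ref{cob_sem} at all: it is stated as a known classical result with a bare citation to Semenov \cite{Sem77}, so there is no in-paper argument to compare yours against. Your treatment is consistent with that and adds correct content for the easy half --- the ``if'' direction via the Ginsburg--Spanier identification of semi-linear sets with Presburger-definable sets \cite{GS66} and the closure properties of $k$-recognizable sets (cf.\ \cite{BHMV94}) is standard and sound, your base case $d=1$ correctly reduces to Theorem~\ref{cobham}, and you rightly flag that the inductive ``gluing'' sketch in your middle paragraph is not a proof but the actual technical core of Semenov's theorem, which you defer to \cite{Sem77} exactly as the paper does.
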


In other words, the semi-linear sets are precisely the sets
recognizable in all integer bases $k \geq 2$.  One might therefore
expect that, as in Theorem~\ref{lec_rig}, the semi-linear sets are
recognizable in all abstract numeration systems.  However, this fails
to be the case, as the following example shows.

\begin{example}\label{n_2n}
The semi-linear set $X = \{n(1,2) : n \in \N\} = \{(n,2n) : n \in \N\}$ is
not $1$-recognizable.  Consider the language $\{(a^n\#^n,a^{2n}) : n
\in \N\}$, consisting of the unary representations of the elements of
$X$.  An easy application of the pumping lemma shows that this is not
a regular language.
\end{example}

Observe that in the one-dimensional case, we have the following
equivalences: semi-linear $\Leftrightarrow$ ultimately periodic
$\Leftrightarrow$ $1$-recognizable.  However, Example~\ref{n_2n} shows
that these equivalences no longer hold in the multi-dimensional
setting.  In order to get a multi-dimensional analogue of
Theorem~\ref{lec_rig}, we must consider the class of $1$-recognizable
sets, which form a proper subclass of the class of semi-linear sets.

Another well-studied subclass of the class of semi-linear sets is the
class of recognizable sets.  A subset $X$ of $\N^d$ is
\emph{recognizable} if there exists a finite monoid $M$, a monoid
homomorphism $\varphi : \N^d \to M$, and a subset $B \subseteq M$ such
that $X = \varphi^{-1}(B)$.  When $d=1$, we have again the following equivalences: recognizable $\Leftrightarrow$ ultimately periodic
$\Leftrightarrow$ $1$-recognizable.  However, for $d>1$ these
equivalences no longer hold.  An unpublished result of Mezei (see
\cite[Proposition~III.12.2]{Eilenberg}) demonstrates that the
recognizable subsets of $\N^2$ are precisely finite unions of sets of
the form $Y \times Z$, where $Y$ and $Z$ are ultimately periodic
subsets of $\N$.  In particular, the \emph{diagonal set} $D = \{(n,n)
: n \in \N\}$ is not recognizable \cite[Exercise~III.12.7]{Eilenberg}.
However, the set $D$ is clearly a $1$-recognizable subset of $\N^2$.
So we see that for $d>1$, the class of $1$-recognizable sets
corresponds neither to the class of semi-linear sets, nor to the class
of recognizable sets.   For further information on recognizable sets,
see \cite{BHMV94}.

Our main result is the following:

\begin{theorem}\label{main}
Let $X \subseteq \N^d$.  Then $X$ is $S$-recognizable for all
abstract numeration systems $S$ if and only if $X$ is $1$-recognizable.
\end{theorem}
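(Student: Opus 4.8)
One direction is immediate: specializing to the abstract numeration system built on $a^*$ shows that a set which is $S$-recognizable for every $S$ is, in particular, $1$-recognizable. For the converse, the plan is as follows. The starting point is the classical decomposition theorem of Eilenberg, Elgot and Shepherson~\cite{EES69}, which I would invoke to write a $1$-recognizable set $X\subseteq\N^d$ as a finite union of ``chain'' sets
\[
X_\sigma=\bigl\{(n_1,\dots,n_d)\in\N^d : n_{\sigma(1)}\le n_{\sigma(2)}\le\dots\le n_{\sigma(d)},\ n_{\sigma(1)}\in Y_1,\ n_{\sigma(k)}-n_{\sigma(k-1)}\in Y_k\ (2\le k\le d)\bigr\},
\]
where $\sigma$ ranges over the permutations of $\{1,\dots,d\}$ and $Y_1,\dots,Y_d\subseteq\N$ are ultimately periodic. (For $d=1$ this merely restates that the $1$-recognizable subsets of $\N$ are the ultimately periodic ones.) For a fixed abstract numeration system $S$, the class of $S$-recognizable subsets of $\N^d$ is closed under Boolean operations, under finite Cartesian products, and under permutations of coordinates~\cite{RM02,CKR10}; hence it suffices to show that $X_\sigma$ with $\sigma=\mathrm{id}$ is $S$-recognizable for every $S$.

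Write $D_1=\{n\in\N^d:n_1\in Y_1\}$ and $D_k=\{n\in\N^d:n_{k-1}\le n_k,\ n_k-n_{k-1}\in Y_k\}$ for $2\le k\le d$, so that $X_{\mathrm{id}}=\bigcap_{k=1}^d D_k$ (the inequalities $n_1\le\dots\le n_d$ being forced by $Y_k\subseteq\N$). Up to adjoining dummy coordinates ranging over $\N$, the set $D_1$ is the one-dimensional set $Y_1$, while each $D_k$ with $k\ge2$ is the two-dimensional set
\[
E_Y=\{(m,n)\in\N^2 : m\le n,\ n-m\in Y\},\qquad Y=Y_k .
\]
Since $\N$ and every ultimately periodic subset of $\N$ are $S$-recognizable (the one-dimensional result of Lecomte and Rigo~\cite{LR01}), the closure properties reduce the whole theorem to the following: $E_Y$ is $S$-recognizable for every abstract numeration system $S$ and every ultimately periodic $Y\subseteq\N$.

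To prove this, decompose $Y$ into a finite set together with finitely many ``tails'' $\{j\ge N:j\equiv r\pmod p\}$, and use the identities
\[
\{(m,n):m\le n,\ n-m\equiv r\pmod p\}=\bigcup_{s=0}^{p-1}\Bigl(\{(m,n):m\le n\}\cap\bigl((s+p\N)\times\N\bigr)\cap\bigl(\N\times(s'+p\N)\bigr)\Bigr)
\]
with $s'=(s+r)\bmod p$, together with $\{(m,n):n-m\ge N\}=\{(m,n):m\le n\}\setminus\bigcup_{c<N}\{(m,m+c)\}$. These express $E_Y$ as a finite Boolean combination of three kinds of sets: the order relation $\{(m,n):m\le n\}$; Cartesian products of one-dimensional ultimately periodic sets; and the graphs $G_c=\{(m,m+c)\}$ of translation by a constant. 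The order relation is $S$-recognizable for every $S$: since $\rep_S$ is an order isomorphism from $(\N,\le)$ onto $L$ ordered genealogically, the language $\rep_S(\{(m,n):m\le n\})^\#$ consists of the padded pairs of words of $L$ in which the first component---once its block of padding symbols is removed---is genealogically at most the second, and this is a regular condition (the conjunction of a comparison of lengths with a lexicographic comparison). The graphs $G_c$ reduce to the successor relation: that $n\mapsto n+c$ is $S$-recognizable for every abstract numeration system is a standard fact, proved by a finite-automaton construction on the minimal automaton of $L$ that exploits the linear recurrence satisfied by the sequence counting the words of $L$ of each given length---the same recurrence that underlies Lecomte and Rigo's one-dimensional argument~\cite{LR01}. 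This closes the chain of reductions.

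The main obstacle is the first step: deducing, from the bare assumption that $X$ is $1$-recognizable, that $X$ has the precise ``chain'' shape above, with each consecutive gap $n_{\sigma(k)}-n_{\sigma(k-1)}$ confined to a one-dimensional ultimately periodic set. This is exactly where the combinatorial structure of regular languages enters, through the decomposition theorem of~\cite{EES69}, and it has no counterpart in the one-dimensional setting of Theorem~\ref{lec_rig}. Once the chain decomposition is available the rest is, by comparison, formal; the one mildly subtle point is that the congruence conditions $n-m\equiv r\pmod p$, which superficially resemble the ``doubling'' relation that Example~\ref{n_2n} shows is \emph{not} $1$-recognizable, cause no trouble, because each such condition is a finite union of intersections of one-dimensional congruence conditions with the order relation.
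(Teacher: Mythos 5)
Your proposal is correct and its overall architecture matches the paper's: both arguments apply the decomposition theorem of \cite{EES69} to the unary padded language, reduce to relations between pairs of coordinates whose difference is confined to an ultimately periodic set, and then show those relations are $S$-recognizable. The differences are in the execution. Your ``chain'' sets are an equivalent repackaging of the paper's Lemma~\ref{ugly_lemma} (form~\eqref{ugly_sum}) and Lemma~\ref{pairwise_comparison} (the forms \eqref{only_j}--\eqref{j_and_k_finite} together with the $A\cup B$ bookkeeping); one point you should make explicit is that Theorem~\ref{decomp} by itself only yields gaps of the form $\sum_\ell c_\ell n_\ell + b_\ell$ over a block of independent parameters, and turning such a sumset of arithmetic progressions into a single ultimately periodic set requires the numerical-semigroup (Frobenius number) argument that the paper invokes via \cite{RA05}. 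The genuine divergence is in the analogue of Lemma~\ref{S-recog}: the paper builds a nondeterministic automaton that guesses the word $\rep_S(\val_S(w_k)+rN+s)$ and compares it with $w_j$ in radix order, and appeals to the ideas of \cite{CANT10b} for the congruence $n-m\equiv s\pmod r$; you instead write the same relation as a Boolean combination of the genealogical order relation, Cartesian products of one-dimensional congruence classes (handled by Theorem~\ref{lec_rig}), and translation graphs $\{(n,n+c)\}$. This buys a more transparent argument that trades the guessing construction for standard closure properties, at the cost of leaning on those closure properties (Boolean operations, products with re-padding, coordinate permutations) which do hold but deserve a citation or a short verification; and both routes ultimately rest on the same nontrivial external input, namely that the successor (hence translation by any constant) is a synchronous relation for every abstract numeration system, which is the content of the paper's Lemma~\ref{n_plus_k} via \cite{BFRS07}. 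Your closing remark is also well taken: the congruence condition $n-m\equiv r\pmod p$ factors through products of one-dimensional sets precisely because it is a recognizable relation, unlike the proportionality relation of Example~\ref{n_2n}.
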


To illustrate this theorem, we give the following example.

\begin{example}\label{ex_2D}
Let
\[
X = \{(2n,3m+1) : n,m \in\N \text{ and } 2n \geq 3m+1\} \cup
\{(n,2m) : n,m  \in\N \text{ and } n < 2m\}.
\]
\begin{figure}[htbp]
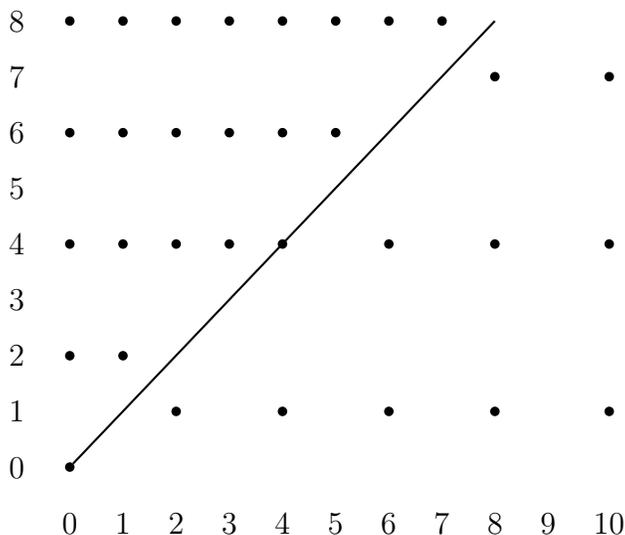

\begin{center}
\begin{psmatrix}[mnode=dot,colsep=0.5cm,rowsep=0.25cm]
[mnode=r]8&\ &\ &\ &\ &\ &\ &\ &\ & [mnode=r]\ \\
[mnode=r]7&   &  &   &   &   &  &   &  &\ & & \ \\
[mnode=r]6&\ &\ &\ &\ &\ &\ \\
[mnode=r]5 \\
[mnode=r]4&\ &\ &\ &\ &\ &  &\ &  &\ &  &\ \\
[mnode=r]3 \\
[mnode=r]2&\ &\ \\
[mnode=r]1&  &  &\ &  &\ &  &\ &  &\  &  &\ \\
[mnode=r]0&\ \\
&[mnode=r]0&[mnode=r]1&[mnode=r]2&[mnode=r]3&[mnode=r]4&[mnode=r]5&[mnode=r]6&[mnode=r]7&[mnode=r]8&[mnode=r]9&[mnode=r]10
\end{psmatrix}
\ncline{9,2}{1,10}
\end{center}
\caption{The set $X$ of Example~\ref{ex_2D}}
\end{figure}
It is clear that $X$ is $1$-recognizable.  Let $S = (L,\Sigma,<)$ be
an abstract numeration system.  By Theorem~\ref{lec_rig}, the sets $\{2n
: n \in \N\}$ and $\{3m+1 : m \in \N\}$ are both $S$-recognizable, and
so the set $\{(2n,3m+1) : n,m \in\N\}$ is also $S$-recognizable.  In
other words, the set $\{(\rep_S(2n), \rep_S(3m+1))^\# : n,m \in \N\}$
is accepted by a finite automaton.  Furthermore, the
set $\{(x,y)^\# : x,y \in L \text{ and } x \geq y\}$ is also accepted
by a finite automaton, and so by taking the product of these two
automata we obtain an automaton accepting
\[
\{(\rep_S(2n), \rep_S(3m+1))^\# : n,m \in \N \text{ and } 2n \geq
3m+1\}.
\]
In the same way we can construct an automaton to accept the set
\[
\{(\rep_S(n),\rep_S(2m))^\# : n,m  \in\N \text{ and } n < 2m\}.
\] 
Since the union of two regular languages is regular, we see that
$X$ is $S$-recognizable.
\end{example}

\section{Proof of our main result}
In order to obtain our main result, we will need a classical result of
Eilenberg, Elgot, and Shepherdson \cite[Theorem~11.1]{EES69} (see also
\cite[Theorem~C.1.1]{Rub04}).  We first need the following definition.

\begin{definition}
Let $A$ be a non-empty subset of $\{1,\ldots,d\}$.  Define the
subalphabet 
\[
\Sigma_A = \{x \in (\Sigma \cup \{\#\})^d : \text{the $i$-th component
  of $x$ is \# exactly when } i \notin A\}.
\]
\end{definition}

\begin{example}
Let $\Sigma = \{a\}$ and $d=4$.  If $A = \{1,2,3,4\}$, then
$\Sigma_A = \{(a,a,a,a)\}$.  If $A = \{2,3\}$, then
$\Sigma_A = \{(\#,a,a,\#)\}$.  If $A = \{3\}$, then
$\Sigma_A = \{(\#,\#,a,\#)\}$.
\end{example}

\begin{theorem}[Decomposition \cite{EES69}]
\label{decomp}
Let $R \subseteq (\Sigma^*)^d$.  The language $R^{\#} \subseteq
((\Sigma \cup \{\#\})^d)^*$ is regular if and only if it is a finite
union of languages of the form
\[
R_0 \cdots R_t, \quad t \in \N,
\]
where each factor $R_i \subseteq (\Sigma_{A_i})^*$ is regular and
$A_t \subseteq \cdots \subseteq A_0 \subseteq \{1,\ldots,d\}$.
\end{theorem}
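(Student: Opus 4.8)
The backward direction is immediate and uses nothing about the nesting of the $A_i$: each product $R_0\cdots R_t$ of regular languages is regular, since regular languages are closed under concatenation, and a finite union of regular languages is regular. So the plan is to concentrate on the forward direction.

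Assume $R^\#$ is regular. The first step is to record the structural feature of $R^\#$ that drives everything: every word $w\in R^\#$ is a $\#$-padded tuple, so each of its $d$ tracks is a (possibly empty) word over $\Sigma$ followed by a block of $\#$'s, with at least one track free of $\#$. Reading $w$ from left to right, the set of tracks still in their $\Sigma$-phase can only shrink; hence $w$ has a canonical factorization $w = v_0 v_1\cdots v_s$ in which $v_i$ is the maximal nonempty factor throughout which the active-track set equals a fixed $B_i\subseteq\{1,\dots,d\}$. Then $B_0\supsetneq B_1\supsetneq\cdots\supsetneq B_s$ is strictly decreasing (so $s\le d$) with $B_s\ne\emptyset$, each $v_i\in(\Sigma_{B_i})^+$, and—because the alphabets $\Sigma_{B_i}$ are pairwise disjoint along a fixed chain—the word $w$ determines its chain. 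There are only finitely many strictly decreasing chains of subsets of $\{1,\dots,d\}$, so $R^\#$ is the disjoint union, over such chains $\mathcal B=(B_0\supsetneq\cdots\supsetneq B_s)$, of the languages $R^\#\cap(\Sigma_{B_0})^+\cdots(\Sigma_{B_s})^+$, together with $\{\varepsilon\}$ if the empty tuple belongs to $R$ (which is a product of the required form). Each piece is regular, being an intersection of two regular languages.

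What remains is a self-contained lemma: a regular language $L\subseteq\Gamma_0^*\Gamma_1^*\cdots\Gamma_s^*$ (for us $\Gamma_i=\Sigma_{B_i}$) is a finite union of products $L_0 L_1\cdots L_s$ with each $L_i\subseteq\Gamma_i^*$ regular. To see this, I would fix a deterministic automaton $\mathcal A=(Q,q_0,\delta,F)$ for $L$. Any word of $L$ can be written as $u_0 u_1\cdots u_s$ with $u_i\in\Gamma_i^*$, and the run of $\mathcal A$ on it passes through interface states $q_0=p_0,p_1,\dots,p_s$ with $\delta(p_s,u_s)\in F$; for each $\bar p=(p_1,\dots,p_s)\in Q^s$ set $L_0^{\bar p}=\{u\in\Gamma_0^*:\delta(q_0,u)=p_1\}$, $L_i^{\bar p}=\{u\in\Gamma_i^*:\delta(p_i,u)=p_{i+1}\}$ for $1\le i<s$, and $L_s^{\bar p}=\{u\in\Gamma_s^*:\delta(p_s,u)\in F\}$, each of which is regular. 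Routing words through their interface states gives $L=\bigcup_{\bar p\in Q^s}L_0^{\bar p}L_1^{\bar p}\cdots L_s^{\bar p}$, a finite union of products of the desired shape. Applying the lemma to each piece from the previous step and recalling $B_0\supseteq\cdots\supseteq B_s$, one obtains $R^\#$ as a finite union of languages $R_0\cdots R_t$ with $R_i\subseteq(\Sigma_{A_i})^*$ regular and $A_t\subseteq\cdots\subseteq A_0\subseteq\{1,\dots,d\}$, which is what we want.

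I expect no serious obstacle; the one point deserving care is the canonical factorization, where one must verify that the active-track set is genuinely monotone along a $\#$-padded word (this is precisely the defining property of padding) and that boundedly many chains occur, so the outer union is finite. The decomposition of a regular language sitting inside a product of starred alphabets is then a routine state-splitting argument, and the easy direction needs only standard closure properties of regular languages.
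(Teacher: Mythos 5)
The paper does not prove Theorem~\ref{decomp}; it is quoted as a classical result of Eilenberg--Elgot--Shepherdson with a citation, so there is no in-paper argument to compare against. Your proof is correct and is essentially the standard argument for this theorem. The two key points both check out: (i) in a padded word $(w_1,\ldots,w_d)^\#$ the set of non-$\#$ tracks at position $p$ is $\{i : |w_i|\ge p\}$, which is genuinely non-increasing in $p$ and non-empty at the last position, and since a letter of $(\Sigma\cup\{\#\})^d$ determines which $\Sigma_B$ it lies in, each non-empty word of $R^\#$ lies in exactly one product $(\Sigma_{B_0})^+\cdots(\Sigma_{B_s})^+$ over a strictly decreasing chain, of which there are finitely many; (ii) the state-splitting lemma for a regular $L\subseteq\Gamma_0^*\cdots\Gamma_s^*$ is sound in both inclusions (any accepting run of a chosen factorization yields an interface tuple, and conversely any tuple of words routed through a fixed interface tuple concatenates to an accepted word). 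The empty-word case and the relaxation from strict to non-strict containments $A_t\subseteq\cdots\subseteq A_0$ are handled correctly, and you rightly observe that only the forward direction uses the padding structure, consistent with the paper's remark that the theorem fails for arbitrary languages over $(\Sigma\cup\{\#\})^d$.
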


\begin{remark}
Theorem~\ref{decomp} does not hold if $R^\#$ is replaced by an
arbitrary language over $(\Sigma \cup \{\#\})^d$.  It is only valid
due to the definition of the map $(\cdot)^\#$.
\end{remark}

\begin{example}
Let $R = \{(a^{5n},a^{6m}): n,m \in \N\}$.  Then $R^{\#}$ is regular,
since one can easily construct an automaton that simultaneously checks
that the length of the first component of its input is a multiple of $5$
and that the length of the second component is a multiple of $6$.
Moreover, we have
\[
R^{\#} = \bigcup_{\ell=0}^5 (a^{30}, a^{30})^*(a^{5\ell}\#^\ell, a^{6\ell})(\#^6,
a^6)^* \quad\cup\quad \bigcup_{\ell=0}^4 (a^{30}, a^{30})^*(a^{5(\ell+1)},
a^{6\ell}\#^{5-\ell})(a^5, \#^5)^*.
\]
Observe that each of the languages appearing in the unions above are
products of the form described in Theorem~\ref{decomp}.
\end{example}

\begin{lemma}\label{ugly_lemma}
Let $X \subseteq \N^d$.  Then $X$ is $1$-recognizable if and
only if $X$ is a finite union of sets of the form
\begin{multline}\label{ugly_sum}
\Bigg\lbrace
\sum_{\ell=0}^t (c_\ell(n_{\ell,1},\ldots,n_{\ell,d}) +
  (b_{\ell,1},\ldots,b_{\ell,d})) \colon (\forall \ell)(\forall i)\;
  n_{\ell, i} \in \N \text{ and } \\ (\forall \ell)(\forall i)\;
  (i \notin A_\ell \Rightarrow n_{\ell,i} = 0) \text{ and } (\forall
  \ell)(\forall i)(\forall j) (i,j \in A_\ell \Rightarrow n_{\ell,i} = n_{\ell,j})
\Bigg\rbrace
\end{multline}
where
\begin{itemize}
\item $t \in \N$,
\item $A_t \subseteq \cdots \subseteq A_0 \subseteq \{1,\ldots,d\}$,
\item $c_0, \ldots, c_t \in \N$,
\item $(\forall \ell)(\forall i)\; b_{\ell, i} \in \N$,
\item $(\forall \ell)(\forall i)\; (i \notin A_\ell \Rightarrow b_{\ell,i} = 0)$, and
\item $(\forall \ell)(\forall i)(\forall j)\; (i,j \in A_\ell \Rightarrow b_{\ell,i} =
  b_{\ell,j})$.
\end{itemize}
\end{lemma}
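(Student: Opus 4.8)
The plan is to derive the lemma from the Decomposition Theorem (Theorem~\ref{decomp}) by setting up a dictionary between languages of the form $R_0 \cdots R_t$ and the ``ugly sums'' of~\eqref{ugly_sum}. For the abstract numeration system $S$ built on $a^*$ one has $\rep_S(n) = a^n$, so $X$ is $1$-recognizable exactly when $R^\#$ is regular, where $R = \{(a^{n_1},\ldots,a^{n_d}) : (n_1,\ldots,n_d) \in X\}$. By Theorem~\ref{decomp} this holds if and only if $R^\#$ is a finite union of products $R_0 \cdots R_t$ with $A_t \subseteq \cdots \subseteq A_0 \subseteq \{1,\ldots,d\}$ and each factor $R_\ell$ a regular subset of $(\Sigma_{A_\ell})^*$. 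Since $\Sigma = \{a\}$, each subalphabet $\Sigma_{A_\ell}$ is a single letter $x_\ell$ (carrying $a$ in the components indexed by $A_\ell$ and $\#$ elsewhere), so a regular subset of $(\Sigma_{A_\ell})^*$ is just $\{x_\ell^m : m \in U_\ell\}$ for some ultimately periodic $U_\ell \subseteq \N$. Writing each $U_\ell$ as a finite union of arithmetic progressions $\{c n + b : n \in \N\}$ (allowing $c = 0$ to absorb the finitely many exceptional points) and distributing the concatenation over these unions, one reduces to the case where every factor is a single progression $R_\ell = \{x_\ell^{c_\ell n + b_\ell} : n \in \N\}$.

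The heart of the argument is then an explicit translation. Because the $A_\ell$ are nested, the word $x_0^{m_0}\cdots x_t^{m_t}$ has in its $i$-th component the block $a^{p_i}\#^{q_i}$ with $p_i = \sum_{\ell\,:\,i\in A_\ell} m_\ell$; when $A_t \neq \emptyset$, choosing $i \in A_t$ gives $p_i = m_0 + \cdots + m_t$, so the word is the genuine padding $(a^{p_1},\ldots,a^{p_d})^\#$. Hence, for $A_t \neq \emptyset$, the product $R_0 \cdots R_t$ equals $\{(a^{p_1},\ldots,a^{p_d})^\# : \vec p \in Y\}$, where $Y$ is the set obtained by letting each $m_\ell$ range over $\{c_\ell n + b_\ell : n \in \N\}$; and a short computation identifies $Y$ with the ugly sum~\eqref{ugly_sum} for these $A_\ell$, these $c_\ell$, and the choice $b_{\ell,i} = b_\ell$ for $i \in A_\ell$, $b_{\ell,i} = 0$ otherwise — the side conditions on the $n_{\ell,i}$ in~\eqref{ugly_sum} encoding precisely that the scalar $m_\ell$ is attached to the coordinates in $A_\ell$. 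Conversely, every ugly sum arises from such a product in this way. Since $(\cdot)^\#$ is injective on tuples of powers of $a$, applying the correspondence term by term to a finite union shows that $R^\#$ is a finite union of products (equivalently, $X$ is $1$-recognizable) if and only if $X$ is a finite union of ugly sums; in the backward direction each ugly sum produces a product of single progressions, which is regular, and a finite union of regular languages is regular.

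The main point requiring care is the degenerate case $A_\ell = \emptyset$, which by nesting can only occur for a final block of indices and which I have sidestepped above by assuming $A_t \neq \emptyset$. On the language side: if $A_t = \emptyset$ and $R_0 \cdots R_t \subseteq R^\#$, then any word with $|w_t| > 0$ would, in \emph{every} component, end in the copies of $\#$ contributed by $w_t$ and hence carry strictly fewer than $|w_0\cdots w_t|$ letters $a$, contradicting the fact that it is a padding; thus $R_t = \{\varepsilon\}$ and this factor may be dropped from the product. Iterating removes all empty-$A$ factors, leaving either a product with $A_t \neq \emptyset$ or the product $\{\varepsilon\}$, which corresponds to the singleton $\{(0,\ldots,0)\}$ — itself a (trivial) ugly sum. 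On the ugly-sum side the same degeneracy is harmless: when $A_\ell = \emptyset$ the conditions of~\eqref{ugly_sum} force $n_{\ell,i} = 0$ and $b_{\ell,i} = 0$ for all $i$, so the $\ell$-th summand vanishes identically and may be discarded. With this bookkeeping the two descriptions coincide, and the lemma follows.
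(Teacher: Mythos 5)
Your proof is correct and follows essentially the same route as the paper: apply the Decomposition Theorem to $R^\#$, use the fact that a regular language over a one-letter alphabet is a finite union of arithmetic progressions, distribute over the unions, and translate each product $R_0\cdots R_t$ into a sum of the form \eqref{ugly_sum} via the nesting of the $A_\ell$. Your treatment is in fact more careful than the paper's, which silently skips the verification that the products consist of genuine paddings and the degenerate case $A_t = \emptyset$; both points are handled correctly in your write-up.
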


\begin{proof}
Let $\Sigma = \{a\}$ and let $S = (\Sigma^*,\Sigma,<)$.  We define
\[
R := \rep_S(X)=\{(a^{n_1},\ldots,a^{n_d}) \colon (n_1,\ldots,n_d) \in X\}.
\]
The set $X$ is $1$-recognizable if and only if the language $R^{\#}$
is regular.  By Theorem~\ref{decomp}, the language $R^\#$ is regular
if and only if it is a finite union of languages of
the form
\[
R_0 \cdots R_t, \quad t \in \N,
\]
where each factor $R_\ell \subseteq (\Sigma_{A_\ell})^*$ is regular and
$A_t \subseteq \cdots \subseteq A_0 \subseteq \{1,\ldots,d\}$.  Since
$|\Sigma|=1$, we have $|\Sigma_{A_\ell}| = 1$.  Let $\Sigma_{A_\ell} =
\{x\}$.  It is well-known \cite[Proposition~V.1.1]{Eilenberg} that
$R_\ell$ is a finite union of languages of the
form $\{x^{pi+q} : i \in \N\}$, where $p,q \in \N$.  Without loss of
generality we can assume that $R_\ell$ is exactly of this form.
Hence, the language $R_\ell$ consists of the representations of a set
of the form
\[
\{c_\ell(n_{\ell,1},\ldots,n_{\ell,d}) +
(b_{\ell,1},\ldots,b_{\ell,d}) : (\forall i) (n_{\ell,i} \in \N)\}.
\]
The conditions $A_t \subseteq \cdots \subseteq A_0 \subseteq
\{1,\ldots,d\}$ impose the restrictions on the $n_{\ell,i}$'s and
the constants  $b_{\ell,i}$ in the statement of the lemma.
The concatenation of the $R_\ell$'s gives the sum described above.
\end{proof}

\begin{example}\label{ex_part1}
Let $X = \{(5n, 5n+4m+6\ell+1, 5n+4m+6\ell+3, 5n) : n,m,\ell \in\N\}$.
The unary representation of $X$ is
\[
R^{\#} = ((a,a,a,a)^5)^*((\#,a,a,\#)^4)^*((\#,a,a,\#)^6)^*(\#,a,a,\#)(\#,\#,a,\#)^2.
\]
Since $R^{\#}$ is regular the set $X$ is $1$-recognizable.  The set
$X$ can be written as
\[
X = \{5(n,n,n,n) + 4(0,m,m,0) + 6(0,\ell,\ell,0) + (0,1,1,0) + (0,0,2,0) :
n,m,\ell \in \N\},
\]
which is an expression of the form \eqref{ugly_sum} where
$t=3$; $A_0 = \{1,2,3,4\}$, $A_1 = A_2 = \{2,3\}$, $A_3 = \{3\}$;
$c_0=5, c_1=4, c_2=6, c_3=0$; and $b_{0,i} = b_{1,i} = 0$ for all $i$,
$(b_{2,1},b_{2,2}, b_{2,3},b_{2,4}) = (0,1,1,0)$, $(b_{3,1},b_{3,2}, b_{3,3},b_{3,4}) =
(0,0,2,0)$.

Furthermore, we have a factorization of $R^{\#}$ as given in
Theorem~\ref{decomp}: $R^{\#} = R_0R_1R_2R_3$, where $R_0 =
((a,a,a,a)^5)^*$, $R_1 = ((\#,a,a,\#)^4)^*$, $R_2 =
((\#,a,a,\#)^6)^*(\#,a,a,\#)$, and $R_3 = (\#,\#,a,\#)^2$, with the same
$A_\ell$'s as those defined above.  The term $5(n,n,n,n)$
corresponds to $R_0$, the term $4(0,m,m,0)$ corresponds to $R_1$, the
term $6(0,\ell,\ell,0) + (0,1,1,0)$ corresponds to $R_2$, and the term
$(0,0,2,0)$ corresponds to $R_3$.
\end{example}

In the sequel we write ${\bf e}_i$ to denote the element of $\N^d$
that contains a 1 in its $i$-th component and 0's in all others.

\begin{lemma}\label{pairwise_comparison}
A set $X \subseteq \N^d$ of the form \eqref{ugly_sum} can be written as a union $A \cup
B$, where $A$ is made up of finite unions and intersections of sets having one of the forms
\eqref{only_j}--\eqref{j_and_k_finite} below and $B$ is a finite
intersection of sets of the form \eqref{only_j} or \eqref{j_and_k} below:
\begin{equation}\label{only_j}
\left\{ \sum_{\substack{i=1\\i \neq j}}^d n_i {\bf e}_i + (rn_j +
  s){\bf e}_j : n_1, \ldots, n_d \in \N, n_j \ge N \right\}
\end{equation}
where $1 \leq j \leq d$,  and $r,s,N \in \N$;
\begin{equation}\label{j_and_k}
\left\{ \sum_{\substack{i=1\\i \neq j}}^d n_i {\bf e}_i + (n_k + rn_j +
  s){\bf e}_j : n_1, \ldots, n_d \in \N , n_j \ge N \right\}
\end{equation}
where $1 \leq j,k \leq d$,  $j \neq k$, and $r,s,N \in \N$;
\begin{equation}\label{only_j_finite}
\left\{ \sum_{\substack{i=1\\i \neq j}}^d n_i {\bf e}_i + (rn_j +
  s){\bf e}_j : n_1, \ldots, n_d \in \N, n_j  \in C \right\}
\end{equation}
where $1 \leq j \leq d$, $r,s \in \N$, and $C \subseteq \N$
is a finite set; or
\begin{equation}\label{j_and_k_finite}
\left\{ \sum_{\substack{i=1\\i \neq j}}^d n_i {\bf e}_i + (n_k + rn_j +
  s){\bf e}_j : n_1, \ldots, n_d \in \N , n_j \in C \right\}
\end{equation}
where $1 \leq j,k \leq d$, $j \neq k$, and $r,s \in \N$,
and $C \subseteq \N$ is a finite set.
\end{lemma}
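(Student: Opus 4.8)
The plan is to extract from the nested structure $A_t\subseteq\cdots\subseteq A_0$ a transparent coordinatewise description of a set $X$ of the form \eqref{ugly_sum}, and then to read off the decomposition, the union in $A\cup B$ arising from the ``gaps'' of numerical semigroups. First, for $i\in\{1,\dots,d\}$ let $m(i)$ be the largest $\ell$ with $i\in A_\ell$, and set $m(i)=-1$ if $i\notin A_0$; by nesting, $i\in A_\ell\iff 0\le\ell\le m(i)$. A step $\ell$ with $A_\ell=\emptyset$ contributes nothing to \eqref{ugly_sum}, so we may assume $t=\max_i m(i)$. The hypotheses on the $n_{\ell,i}$ and $b_{\ell,i}$ say precisely that step $\ell$ is governed by a single free parameter $n_\ell\in\N$ and a single constant $\beta_\ell\in\N$, with $n_{\ell,i}=n_\ell$ and $b_{\ell,i}=\beta_\ell$ for $i\in A_\ell$ and both zero otherwise. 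Hence every $x\in X$ satisfies $x_i=\sum_{\ell=0}^{m(i)}(c_\ell n_\ell+\beta_\ell)$, a quantity depending on $i$ only through $m(i)$. Letting $-1<\mu_1<\cdots<\mu_q=t$ be the distinct values of $m$ and $G_h=m^{-1}(\mu_h)$ (with $G_0=m^{-1}(-1)$), it follows that $x$ vanishes on $G_0$ and is constant, equal to $V_h:=\sum_{\ell=0}^{\mu_h}(c_\ell n_\ell+\beta_\ell)$, on each $G_h$.

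Next I set $V_0:=0$, $\mu_0:=-1$, and $J_h:=\{\mu_{h-1}+1,\dots,\mu_h\}$, so that $J_1,\dots,J_q$ partition $\{0,\dots,t\}$ and $V_h-V_{h-1}=\gamma_h+\sum_{\ell\in J_h}c_\ell n_\ell$ with $\gamma_h:=\sum_{\ell\in J_h}\beta_\ell$. Because distinct blocks $J_h$ use disjoint parameters, the successive differences $V_1-V_0,\dots,V_q-V_{q-1}$ vary independently, the $h$-th one ranging over $\gamma_h+T_h$, where $T_h:=\langle c_\ell:\ell\in J_h\rangle$ is the submonoid of $(\N,+)$ generated by the corresponding $c_\ell$; a routine telescoping argument gives the converse, and so $X$ is exactly the set of $x\in\N^d$ that vanish on $G_0$, are constant on each $G_h$, and satisfy $V_h-V_{h-1}\in\gamma_h+T_h$ for every $h$. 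I will then invoke the standard fact that a submonoid $T$ of $(\N,+)$ agrees, above some threshold, with the set of multiples of $\delta:=\gcd(T\setminus\{0\})$ (when $T\neq\{0\}$); it follows that each $\gamma_h+T_h$ can be written as $E_h\cup Q_h$, where $E_h\subseteq\N$ is finite and $Q_h$ is either the singleton $\{\gamma_h\}$ (when $T_h=\{0\}$) or an infinite arithmetic progression $\{\gamma_h+\delta_h m:m\ge N_h\}$.

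Now fix the representatives $p_h:=\min G_h$. The condition $x_i=0$ (for $i\in G_0$) defines a set of the form \eqref{only_j} with $r=s=N=0$; the condition $x_i=x_{p_h}$ (for $i\in G_h\setminus\{p_h\}$) defines a set of the form \eqref{j_and_k} with $r=s=N=0$; for $h\ge2$ the condition $V_h-V_{h-1}\in\gamma_h+T_h$ becomes $x_{p_h}-x_{p_{h-1}}\in E_h\cup Q_h$, where $\{x:x_{p_h}-x_{p_{h-1}}=e\}$ (for $e\in E_h$) is of the form \eqref{j_and_k} with $r=0$ and $\{x:x_{p_h}-x_{p_{h-1}}\in Q_h\}$ is a single set of the form \eqref{j_and_k}; and for $h=1$ the absolute condition $x_{p_1}\in\gamma_1+T_1=E_1\cup Q_1$ splits in the same way into sets of the form \eqref{only_j}. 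Gathering the $G_0$-conditions and the constancy conditions into one finite intersection $C_0$ of sets of the forms \eqref{only_j} and \eqref{j_and_k}, we get $X=C_0\cap\bigcap_{h=1}^q C_h$, where $C_h=\bigl(\bigcup_{e\in E_h}D_{h,e}\bigr)\cup Q_h'$ with $Q_h'$ the single ``$Q_h$-set'' and the $D_{h,e}$ single sets, all of the forms \eqref{only_j} and \eqref{j_and_k}. Distributing this intersection over the unions, $X$ becomes a union over the subsets $F\subseteq\{1,\dots,q\}$ of the sets $C_0\cap\bigcap_{h\in F}\bigl(\bigcup_{e\in E_h}D_{h,e}\bigr)\cap\bigcap_{h\notin F}Q_h'$; take $B$ to be the $F=\emptyset$ term, namely $C_0\cap\bigcap_{h=1}^q Q_h'$, and take $A$ to be the union of all the other terms. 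Then $X=A\cup B$, where $B$ is a finite intersection of sets of the forms \eqref{only_j} and \eqref{j_and_k} as required, and, after distributing once more over each $\bigcup_{e\in E_h}$, the set $A$ is a finite union of finite intersections of sets of the forms \eqref{only_j} and \eqref{j_and_k}, hence is obtained from finite unions and intersections of sets of the forms \eqref{only_j}--\eqref{j_and_k_finite}.

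The crux is the first part: seeing that the coordinates of $X$ collapse onto the blocks $G_h$ and that the inter-block differences $V_h-V_{h-1}$ are independent translates of submonoids of $\N$. Once this normal form is in place, the translation of each elementary condition into one of the listed forms, and the distributive rearrangement that produces $A$ and $B$, are routine. The degenerate cases cause no difficulty: if $q=0$ then $X=\{0\}=C_0$ and $A=\emptyset$; if $G_0=\emptyset$ the $G_0$-conditions are simply absent; and whenever a forced intersection would be empty one reads $C_0$ as $\N^d$, which is itself of the form \eqref{only_j} (take $r=1$, $s=N=0$).
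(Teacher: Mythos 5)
Your proof is correct, and at its core it takes the same route as the paper's: decompose the chain $A_t\subseteq\cdots\subseteq A_0$ into blocks of coordinates that share the same ``lifetime'' $m(i)$, observe that each coordinate equals the value of its block's representative, and control the difference between consecutive representatives by the Frobenius-type fact that a finitely generated submonoid of $\N$ coincides beyond a finite threshold with the multiples of the gcd of its generators. Two points of comparison are worth recording. First, a cosmetic one: you never use the forms \eqref{only_j_finite} and \eqref{j_and_k_finite}, since you encode each exceptional value $e\in E_h$ as a single set of the form \eqref{only_j} or \eqref{j_and_k} with $r=0$ and $s=e$; the paper packages the same finite data into the forms with $n_j\in C$. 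Either is permitted by the statement. Second, a substantive one: your distribution of $\bigcap_h\bigl(Q_h'\cup\bigcup_{e\in E_h}D_{h,e}\bigr)$ over \emph{all} subsets $F$ of blocks is the correct way to finish. The paper's displayed formula for $A$ in the general case takes the union only over a single index $p$ at which the exceptional set $Y_{j_{p,1}}'$ is used, forcing every other block $q\neq p$ into its ``eventual'' set $Y_{j_{q,1}}$; read literally, this omits the points of $X$ at which two or more blocks simultaneously take values in their finite exceptional sets (these occur, e.g., when two blocks each have $c$-generators $\{2,3\}$, so that the difference $0$ is attainable but lies outside the eventual progression). Your union over all $F$ repairs exactly this, so your write-up is in fact more complete than the paper's at that step.
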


\begin{proof}
Let $X$ be a set of the form \eqref{ugly_sum} where $t$, the
$A_\ell$'s, the $c_\ell$'s, and the $b_{\ell,i}$'s are fixed and
satisfy the conditions listed in Lemma~\ref{ugly_lemma}.
We will write $X = A \cup B$, where
\[
B = \bigcap_{j=1}^d Y_j,
\]
where each $Y_j$ is either of the form \eqref{only_j} or \eqref{j_and_k}, and 
$A$ is made up of finite unions and intersections of sets of the forms \eqref{only_j}--\eqref{j_and_k_finite}.

First observe that if $j \in \{1,\ldots,d\} \setminus A_0$ the set $X$
contains only vectors whose $j$-th component is always $0$.  For each
such $j$, we define
\[
Y_j = \left\{ \sum_{\substack{i=1\\i \neq j}}^d n_i {\bf e}_i + 0{\bf e}_j : n_1, \ldots, n_d \in \N \right\},
\]
which is of the form \eqref{only_j}.

First consider the case where $A_0=\cdots=A_t$. Define $j_1<\cdots<j_{|A_0|}$ to be the elements of $A_0$. Define 
\[
Y_{j_1} = \left\{\sum_{\substack{i=1\\i \neq j_1}}^d n_i {\bf e}_i +
  (rn_{j_1} + s){\bf e}_{j_1} : n_1, \ldots, n_d \in \N,
  n_{j_1} \ge N \right\},
\]
where $r = \gcd(c_0,\ldots,c_t)$, $s=\sum_{\ell=0}^t b_{\ell,j_1}$,
and $N-1$ is the largest
integer $n$ such that $rn$ cannot be written as a nonnegative integer
linear combination of $c_0,\ldots,c_t$ (note that $N$ exists
and is finite \cite[Theorem~1.0.1]{RA05}).  Note that $Y_{j_1}$ is of the form
\eqref{only_j}.

Define
\[
Y_{j_1}' = \left\{\sum_{\substack{i=1\\i \neq j_1}}^d n_i {\bf e}_i +
  (rn_{j_1} + s){\bf e}_{j_1} : n_1, \ldots, n_d \in \N,
  n_{j_1} \in C \right\},
\]
where $C$ is the set of all nonnegative integers $n < N$
such that $rn$ can be written as a nonnegative integer
linear combination of $c_0,\ldots,c_t$.  Note that $Y_{j_1}'$ is of
the form \eqref{only_j_finite}.

 For $k\in\{2,\ldots,|A_0|\}$, define 
\[
Y_{j_k} = \left\{ \sum_{\substack{i=1\\i \neq j_k}}^d n_i {\bf e}_i +
  n_{j_{k-1}}{\bf e}_{j_k} : n_1, \ldots, n_d \in \N \right\},
\]
which is of the form \eqref{j_and_k}.

The set $X$ can be written as the union $ A\cup B$ where 
\[
B=\bigcap_{j\in\{1,\ldots,d\}\setminus A_0} Y_j \quad\cap\quad \bigcap_{k\in\{1,\ldots,|A_0|\}} Y_{j_k}
\]
and 
\[
A=\bigcap_{j\in\{1\ldots,d\}\setminus A_0} Y_j \quad\cap\quad
\bigcap_{k\in\{2\ldots,|A_0|\}} Y_{j_k} \quad\cap\quad Y_{j_1}'.
\]

Now consider the case where there is at least one index $\ell$ such
that $A_{\ell} \setminus A_{\ell+1}\neq \emptyset$. Define
$\ell_1<\cdots<\ell_{t'}$ to be the indices of the sets $A_\ell$
satisfying $A_{\ell_k} \setminus A_{\ell_k+1}\neq \emptyset$ for each
$k\in\{1,\ldots,t'\}$. We clearly have $1\le t'\le t$ and $0\le\ell_{t'}<t$.

Define $d_1=| A_{\ell_1} \setminus A_{\ell_1+1}|$ and
$j_{1,1}<\cdots<j_{1,d_1}$ to be the elements of $A_{\ell_1} \setminus
A_{\ell_1+1}$. Define
\[
Y_{j_{1,1}} = \left\{\sum_{\substack{i=1\\i \neq j_{1,1}}}^d n_i {\bf e}_i +
  (r_{1}n_{j_{1,1}} + s_{1}){\bf e}_{j_{1,1}} : n_1, \ldots, n_d \in \N,
  n_{j_{1,1}} \ge N_{1} \right\},
\]
where $r_{1} = \gcd(c_0,\ldots,c_{\ell_1})$, $s_{1} =
\sum_{\ell=0}^{\ell_1} b_{\ell,j_{1,1}}$, and 
$N_1-1$ is the largest integer $n$ such that $r_{1}n$ cannot be written
as a nonnegative integer linear combination of
$c_0,\ldots,c_{\ell_1}$.  Note that $Y_{j_{1,1}}$ is of the form
\eqref{only_j}.

Define
\[
Y_{j_{1,1}}' = \left\{\sum_{\substack{i=1\\i \neq j_{1,1}}}^d n_i {\bf e}_i +
  (r_1n_{j_{1,1}} + s_1){\bf e}_{j_{1,1}} : n_1, \ldots, n_d \in \N,
  n_{j_{1,1}} \in C_1 \right\},
\]
where $C_1$ is the set of all nonnegative integers $n < N_1$ such that
$r_1n$ can be written as a nonnegative integer
linear combination of $c_0,\ldots,c_{\ell_1}$.  Note that $Y_{j_{1,1}}'$ is of
the form \eqref{only_j_finite}.

 For $k\in\{2,\ldots,d_1\}$, define 
\[
Y_{j_{1,k}} = \left\{ \sum_{\substack{i=1\\i \neq j_{1,k}}}^d n_i {\bf e}_i +
  n_{j_{1,k-1}}{\bf e}_{j_{1,k}} : n_1, \ldots, n_d \in \N
\right\},
\]
which is of the form \eqref{j_and_k}.

Define $d_2=| A_{\ell_2} \setminus
A_{\ell_2+1}|$ and $j_{2,1}<\cdots<j_{2,d_2}$ to be the elements of $A_{\ell_2} \setminus
A_{\ell_2+1}$. 
Define 
\[
Y_{j_{2,1}} = \left\{ \sum_{\substack{i=1\\i \neq j_{2,1}}}^d n_i {\bf e}_i +
  (n_{j_{1,1}} + r_2n_{j_{2,1}} + s_2){\bf e}_{j_{2,1}} : n_1, \ldots, n_d \in \N
  , n_{j_{2,1}} \ge N_2 \right\},
\]
where $r_{2} = \gcd(c_{\ell_1+1},\ldots,c_{\ell_2})$, $s_{2} =
\sum_{\ell=\ell_1+1}^{\ell_2} b_{\ell,j_{2,1}}$, and $N_2-1$ is the
largest integer $n$ such that $r_{2}n$ cannot be written as a
nonnegative integer linear combination of
$c_{\ell_1+1},\ldots,c_{\ell_2}$.  Note that $Y_{j_{2,1}}$ is of the
form \eqref{j_and_k}.

Define 
\[
Y_{j_{2,1}}' = \left\{ \sum_{\substack{i=1\\i \neq j_{2,1}}}^d n_i {\bf e}_i +
  (n_{j_{1,1}} + r_2n_{j_{2,1}} + s_2){\bf e}_{j_{2,1}} : n_1, \ldots, n_d \in \N
  , n_{j_{2,1}} \in C_2 \right\},
\]
where $C_2$ is the set of all nonnegative integers $n < N_2$ such that
$r_2n$ can be written as a nonnegative integer linear
combination of $c_{\ell_1+1},\ldots,c_{\ell_2}$.  Note that
$Y_{j_{2,1}}'$ is of the form \eqref{j_and_k_finite}.

 For $k\in\{2,\ldots,d_2\}$, define 
\[
Y_{j_{2,k}} = \left\{ \sum_{\substack{i=1\\i \neq j_{2,k}}}^d n_i {\bf e}_i +
  n_{j_{2,k-1}}{\bf e}_{j_{2,k}} : n_1, \ldots, n_d \in \N
\right\},
\]
which is of the form \eqref{j_and_k}.

We continue in this manner to define $d_p$, $Y_{j_{p,k}}$, and
$Y_{j_{p,1}}'$ for all $p\in\{1,\ldots,t'\}$ and
$k\in\{1,\ldots,d_p\}$.  Finally observe that we have $A_{\ell_{t'}}\setminus
A_{\ell_{t'}+1}\neq \emptyset$ and $ A_{\ell_{t'}+1}=\cdots=A_t$. Define
$d_{t'+1}=|A_t|$ and $j_{t'+1,1}<\cdots<j_{t'+1,d_{t'+1}}$ to be
the elements of $A_t$. Define
\[
Y_{j_{t'+1,1}} = \left\{ \sum_{\substack{i=1\\i \neq j_{t'+1,1}}}^d n_i {\bf e}_i +
 (n_{j_{t',1}}+r_{t'+1} n_{j_{t'+1,1}}+s_{t'+1}){\bf e}_{j_{t'+1,1}} : n_1, \ldots, n_d \in \N,
 n_{j_{t'+1,1}} \ge N_{t'+1} \right\},
\]
where $r_{t'+1} = \gcd(c_{\ell_{t'}+1},\ldots,c_{t})$, $s_{t'+1} =
\sum_{\ell=\ell_{t'}+1}^{\ell_t} b_{\ell,j_{t'+1,1}}$, and $N_{t'+1}-1$
is the largest integer $n$ such that $r_{t'+1}n$ cannot be written as
a nonnegative integer linear combination of
$c_{\ell_{t'}+1},\ldots,c_{t}$.  Again note that $Y_{j_{t'+1,1}}$ is
of the form \eqref{j_and_k}.

Define
\[
Y_{j_{t'+1,1}}' = \left\{ \sum_{\substack{i=1\\i \neq j_{t'+1,1}}}^d
  n_i {\bf e}_i + (n_{j_{t',1}}+r_{t'+1} n_{j_{t'+1,1}}+s_{t'+1}){\bf
    e}_{j_{t'+1,1}} : n_1, \ldots, n_d \in \N, n_{j_{t'+1,1}} \in
  C_{t'+1} \right\},
\]
where $C_{t'+1}$ is the set of all nonnegative integers $n < N_{t'+1}$
such that $r_{t'+1}n$ can be written as a nonnegative integer linear
combination of $c_{\ell_{t'}+1},\ldots,c_{t}$.  Note that
$Y_{j_{t'+1,1}}'$ is of the form \eqref{j_and_k_finite}.

 For $k\in\{2,\ldots,d_{t'+1}\}$, define 
\[
Y_{j_{t'+1,k}} = \left\{ \sum_{\substack{i=1\\i \neq j_{t'+1,k}}}^d n_i {\bf e}_i +
  n_{j_{t'+1,k-1}}{\bf e}_{j_{t'+1,k}} : n_1, \ldots, n_d \in \N
\right\},
\]
which is of the form \eqref{j_and_k}.

The set $X$ can be written as the union $ A\cup B$ where
\[
B=\bigcap_{j\in\{1,\ldots,d\}\setminus A_0} Y_j \quad\cap\quad
\bigcap_{\substack{p\in\{1,\ldots,t'+1\}\\k\in\{1,\ldots,d_p\}}}
Y_{j_{p,k}}
\]
and
\[
A=  \bigcap_{j\in\{1,\ldots,d\}\setminus A_0} Y_j \quad\cap\quad
 \bigcup_{p\in\{1,\ldots,t'+1\}} \left( Y_{j_{p,1}}' \quad\cap\quad \bigcap_{\substack{q\in\{1,\ldots,t'+1\}\setminus\{p\}\\k\in\{1,\ldots,d_q\}}} Y_{j_{q,k}}  \quad\cap\quad \bigcap_{k\in\{2,\ldots,d_p\}} Y_{j_{p,k}}\right).
\]
\end{proof}

\begin{example}\label{ex_part2}
We continue Example~\ref{ex_part1}.  We will write $X = A \cup B$ as in
Lemma~\ref{pairwise_comparison}.  The $A_\ell$'s are not all the same,
so we can define $t' = 2$, $\ell_1 = 0 < \ell_2 = 2$ as in the proof
of Lemma~\ref{pairwise_comparison}.

We have $d_1 = |A_0 \setminus
A_1| = 2$, $j_{1,1} = 1$ and $j_{1,2}=4$.  We also have $r_1 = \gcd(c_0) = \gcd(5) = 5$
and $s_1 = 0$, and hence $N_1 = 0$.  Therefore,
\[
Y_1 = \{n_2{\bf e_2} + n_3{\bf e_3} + n_4{\bf e_4} + (5n_1+0){\bf e_1} : n_1,n_2,n_3,n_4
\in \N, n_1\ge 0\},
\]
\[
Y_1' = \{n_2{\bf e_2} + n_3{\bf e_3} + n_4{\bf e_4} + (5n_1+0){\bf e_1} : n_2,n_3,n_4 \in
\N, n_1 \in C_1\} = \emptyset,
\]
since $C_1 = \emptyset$, and
\[
Y_4 = \{n_1{\bf e_1}+ n_2{\bf e_2}+ n_3{\bf e_3}+ n_1{\bf e_4} : n_1,n_2,n_3 \in \N\}.
\]

Next we have $d_2 = |A_2  \setminus
A_3| = 1$ and $j_{2,1} = 2$.  We also have $r_2 = \gcd(c_1,c_2) = \gcd(4,6) = 2$
and $s_2 = b_{1,2} + b_{2,2} = 0+1 = 1$, and hence $N_2 = 2$.  Therefore,
\[
Y_2 = \{n_1{\bf e_1} + n_3{\bf e_3} + n_4{\bf e_4} + (n_1+2n_2+1){\bf e_2} : n_1,n_2,n_3
\in \N, n_2 \ge 2\},
\]
and
\begin{eqnarray*}
Y_2' & = & \{n_1{\bf e_1} + n_3{\bf e_3} + n_4{\bf e_4} + (n_1+2n_2+1){\bf e_2} : n_1,n_3
\in \N, n_2 \in C_2 \} \\
& = & \{n_1{\bf e_1} + n_3{\bf e_3} + n_4{\bf e_4} + (n_1+1){\bf e_2} : n_1,n_3
\in \N \},
\end{eqnarray*}
since $C_2 = \{0\}$.

Finally, we have $d_3 = |A_3| = 1$ and $j_{3,1} = 3$.  We also have
$r_3 = \gcd(c_3) = \gcd(0) = 0$ and $s_3 = b_{3,3} = 2$, and hence
$N_3 = 0$.  Therefore,
\[
Y_3 = \{n_1{\bf e_1} + n_2{\bf e_2} + n_4{\bf e_4} + (n_2+0n_3+2){\bf e_3} : n_1,n_2,n_3
\in \N, n_3 \ge 0\},
\]
and
\[
Y_3' = \{n_1{\bf e_1} + n_2{\bf e_2} + n_4{\bf e_4} + (n_2+0n_3+2){\bf e_3} : n_1,n_2
\in \N, n_3 \in C_3\} = \emptyset,
\]
since $C_3 = \emptyset$.

Hence $A = Y_1\cap Y_2' \cap Y_3\cap Y_4$ and $B = Y_1 \cap Y_2 \cap Y_3 \cap Y_4$.
\end{example}

\begin{lemma}\label{n_plus_k}
Let $k \in \N$ and let $S$ be an abstract numeration system.  The set
$X = \{(n,n+k) : n \in \N\}$ is $S$-recognizable.
\end{lemma}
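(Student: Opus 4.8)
The plan is to reduce the lemma to the $S$-recognizability of the genealogical order $<$ on $L$, together with standard closure properties of $S$-recognizable sets. Observe that $\rep_S(X)$ is the set of pairs $(u,v)\in L^2$ with $\val_S(v)=\val_S(u)+k$. When $k=0$ this is just the diagonal $\{(u,u):u\in L\}$. When $k\ge1$, since $\rep_S$ enumerates $L$ in genealogical order, $\val_S(v)=\val_S(u)+k$ holds if and only if $u<v$ and exactly $k-1$ words of $L$ lie strictly between $u$ and $v$ in that order; so the goal is to express this condition as a Boolean combination of projections of the order relation.

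First I would check that the basic relations involved are $S$-recognizable. The diagonal $\{(u,u)^\#:u\in L\}$ is regular: from a DFA for $L$, replace every transition $q\xrightarrow{a}q'$ by $q\xrightarrow{(a,a)}q'$. The strict order $\{(u,v)^\#:u,v\in L,\ u<v\}$ is regular too; this is essentially the observation in Example~\ref{ex_2D} that $\{(x,y)^\#:x,y\in L,\ x\ge y\}$ is accepted by a finite automaton, whose complement inside the regular language $\{(x,y)^\#:x,y\in L\}=\rep_S(\N^2)^\#$ is the strict order. A direct argument is also short: if $|u|<|v|$ then $(u,v)^\#$ is a word of $\{(x,y)^\#:x,y\in L\}$ in which the first component contains a $\#$, and if $|u|=|v|$ then $(u,v)^\#=(u,v)$ is a word over $\Sigma\times\Sigma$ with $u,v\in L$ and $u<_{\mathrm{lex}}v$; both are regular conditions.

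For a fixed integer $j\ge1$, set
\[
B_j=\{(u,v)\in L^2:\exists\,w_1,\dots,w_j\in L,\ u<w_1<\dots<w_j<v\},
\]
which says that at least $j$ words of $L$ lie strictly between $u$ and $v$. The relation $B_j$ is built from the order relation by adding extra coordinates, taking a fixed number of intersections, and projecting away $w_1,\dots,w_j$; since $S$-recognizable relations are closed under these operations (see \cite{CKR10}), $B_j^\#$ is regular. Writing $B_0$ for the strict order itself, we get $\rep_S(X)=B_{k-1}\setminus B_k$ for $k\ge1$, which is $S$-recognizable by closure under set difference, while for $k=0$ the set $\rep_S(X)$ is the diagonal. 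Hence $X$ is $S$-recognizable in every case.

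The step I expect to require real care is the projection used in forming $B_j$: projecting a $\#$-padded regular language onto a subset of its coordinates need not give a $\#$-padded language, since the position of the padding depends on the maximal length among the surviving coordinates and may shift when a coordinate is removed. This is handled by the usual construction, in which an NFA simulates the automaton for the larger relation while guessing the removed coordinates letter by letter, and---when a removed coordinate turns out to be longer than all the surviving ones---keeps running past the end of its own input, supplying the remaining letters of that coordinate together with $\#$'s in the removed positions; a copy of the automaton for $L$ is run on each guessed coordinate to enforce $w_i\in L$. Alternatively one can avoid projection altogether: whenever $\val_S(v)=\val_S(u)+k$ the difference $|v|-|u|$ is bounded by a constant depending only on $S$ and $k$ (the gaps in the length set of $L$ being bounded), so the at most $k-1$ intermediate words can be guessed directly on finitely many extra tracks of a single NFA reading $(u,v)^\#$. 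The rest is routine.
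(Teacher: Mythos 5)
Your proof is correct, but it takes a genuinely different route from the paper's. The paper's proof is essentially a two-citation argument: the relation $\{(\rep_S(n),\rep_S(n+1)):n\in\N\}$ is the successor function of $S$, which is a synchronous relation by \cite{BFRS07}, and synchronous relations are closed under composition by \cite{FS93}, so the $k$-fold composite $\{(\rep_S(n),\rep_S(n+k)):n\in\N\}$ is synchronous. You instead define the relation first-order from the genealogical order: $\val_S(v)=\val_S(u)+k$ iff $u<v$ and exactly $k-1$ words of $L$ lie strictly between $u$ and $v$, and you realize this as a Boolean combination of projections of the order relation, whose padded version is easily seen to be regular directly (and is already used by the paper in Example~\ref{ex_2D} and in the proof of Lemma~\ref{S-recog}). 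This is the ``automatic structures'' viewpoint: $(L,<)$ is an automatic copy of $(\N,\le)$, so every relation first-order definable from the order is recognizable. What your route buys is self-containedness --- the regularity of the order is immediate, whereas the synchronicity of the successor function is the nontrivial content of the result cited from \cite{BFRS07} --- at the price of needing closure under projection, the one closure property where padding causes trouble. You correctly flag that step, and in fact in this instance it is benign for exactly the reason your alternative argument exploits: $u<w_i<v$ in genealogical order forces $|u|\le|w_i|\le|v|$, so the guessed intermediate tracks never exceed the length of $(u,v)^\#$ and the padding never shifts. Both proofs ultimately rest on the same theory of synchronized relations; yours trades an external proposition for a standard closure argument.
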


\begin{proof}
Let $R = \rep_S(X)$.  To show that $X$ is $S$-recognizable we must
show that $R^\#$ is a regular language.  Consider first the set $Y =
\{(\rep_S(n),\rep_S(n+1)) : n \in \N\}$.  If we interpret $Y$ as the
function mapping $\rep_S(n)$ to $\rep_S(n+1)$, then $Y$ is the
so-called \emph{successor function} (see \cite{AS10} or \cite{LR01}
for more on the successor function).  From
\cite[Proposition~3]{BFRS07} (see also
\cite[Proposition~2.6.7]{CANT10}), we have that $Y$ is a
\emph{synchronous relation}.  In \cite{CANT10} synchronous relations
are defined in terms of letter-to-letter transducers, but this
definition is equivalent to the fact that the language $Y^\#$ is
accepted by a finite automaton.  Moreover, from \cite{FS93} (see
also \cite[Theorem~2.6.6]{CANT10}), we have that the composition of
synchronous relations is again a synchronous relation.  Hence $R$,
which is the $k$-fold composition of $Y$ with itself, is a
synchronous relation.  We conclude that $R^\#$ is a regular
language, as required.
\end{proof}

\begin{lemma}\label{S-recog}
A set $X \subseteq \N^d$ having one of the forms
\eqref{only_j}--\eqref{j_and_k_finite} defined in
Lemma~\ref{pairwise_comparison} is $S$-recognizable for any abstract
numeration system $S$.
\end{lemma}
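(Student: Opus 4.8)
The plan is to exhibit each of the four forms \eqref{only_j}--\eqref{j_and_k_finite} as a Boolean combination, Cartesian product, and coordinate permutation of a handful of manifestly $S$-recognizable sets. I will use freely that, for varying $d$, the class of $S$-recognizable subsets of $\N^d$ is closed under finite unions and intersections, Cartesian products, permutation of coordinates, and composition of binary relations; these closure properties are recorded in \cite{RM02,CKR10} (and, for composition of binary relations, follow from the proof of Lemma~\ref{n_plus_k} and \cite{FS93}), and they are exactly the manipulations already carried out in Example~\ref{ex_2D}. The elementary building blocks are: (i) every ultimately periodic subset of $\N$ is $S$-recognizable (Theorem~\ref{lec_rig}); (ii) $\N$ itself is $S$-recognizable, since $\rep_S(\N)=L$ is regular by definition; (iii) for each $k\in\N$ the graph $\{(n,n+k):n\in\N\}$ is $S$-recognizable (Lemma~\ref{n_plus_k}); and (iv) the order relation $\{(x,y)\in\N^2:x\le y\}$ is $S$-recognizable, since $\rep_S$ is increasing for the genealogical order and comparison of genealogically ordered words is regular (as in Example~\ref{ex_2D}).

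I would first dispose of forms \eqref{only_j} and \eqref{only_j_finite}. In both, the set $X$ is, up to a permutation of coordinates, the Cartesian product of $d-1$ copies of $\N$ with the set $P\subseteq\N$ of admissible values of the $j$-th coordinate: in case \eqref{only_j}, $P=\{rn_j+s:n_j\ge N\}$, which is the singleton $\{s\}$ when $r=0$ and the arithmetic progression $\{s+rN+rm:m\ge 0\}$ when $r\ge 1$; in case \eqref{only_j_finite}, $P=\{rn_j+s:n_j\in C\}$ is finite. In every case $P$ is ultimately periodic, so $X$ is $S$-recognizable by (i), (ii), and closure under product and permutation.

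For forms \eqref{j_and_k} and \eqref{j_and_k_finite} we have $X=\{(x_1,\dots,x_d)\in\N^d:x_j-x_k\in P\}$, with $P$ as above and $j\neq k$; after permuting coordinates and taking a product with $d-2$ copies of $\N$, it suffices to show that $X'=\{(a,b)\in\N^2:b-a\in P\}$ is $S$-recognizable. If $P$ is finite --- in particular whenever $r=0$, and always in case \eqref{j_and_k_finite} --- then $X'=\bigcup_{p\in P}\{(a,b):b=a+p\}$ is a finite union of sets $S$-recognizable by (iii). Otherwise $r\ge 1$ and we are in case \eqref{j_and_k}, so $P=\{q+rm:m\ge 0\}$ with $q=s+rN$, and $b-a\in P$ is equivalent to the conjunction of $b\ge a+q$ and $b\equiv a+q\pmod r$. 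The relation $\{(a,b):b\ge a+q\}$ is the composition of the $S$-recognizable relations $\{(a,c):c\ge a\}$ (by (iv)) and $\{(c,b):b=c+q\}$ (by (iii)), hence $S$-recognizable; and $\{(a,b):b\equiv a+q\pmod r\}$ equals $\bigcup_{\alpha=0}^{r-1}\{a:a\equiv\alpha\pmod r\}\times\{b:b\equiv\alpha+q\pmod r\}$, a finite union of products of arithmetic progressions, hence $S$-recognizable by (i). Their intersection is $X'$.

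I expect the genuinely delicate point to be the closure properties themselves, and in particular the fact that adjoining a free coordinate --- equivalently, forming a Cartesian product with $\N$ --- preserves $S$-recognizability, since the padding in $(\cdot)^{\#}$ is taken over all components simultaneously and the lengths of the representations $\rep_S(x_i)$ are not controlled. This is also the reason every arithmetic or order condition above has to be phrased in a ``synchronous'' way, in the sense of the proof of Lemma~\ref{n_plus_k}: a naive strategy that first generates $n_j$ and then multiplies it by $r$ would fail, because by Example~\ref{n_2n} the graph of multiplication by $2$ need not be $S$-recognizable. The content of forms \eqref{only_j}--\eqref{j_and_k_finite} is precisely that the variable $n_j$ being multiplied never survives as an output coordinate of $X$.
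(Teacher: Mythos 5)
Your proof is correct, and its mathematical content essentially coincides with the paper's: both arguments reduce everything to the $S$-recognizability of ultimately periodic subsets of $\N$ (Theorem~\ref{lec_rig}), the graph of $n \mapsto n+q$ (Lemma~\ref{n_plus_k}), and the regularity of radix-order comparison, and both handle the infinite case of \eqref{j_and_k} by splitting the condition $x_j - x_k \in \{rN+s+rm : m \in \N\}$ into a congruence modulo $r$ together with the inequality $x_j \geq x_k + rN + s$. The difference is one of packaging. The paper builds an explicit nondeterministic automaton that guesses a word $v$ with $\val_S(v) = \val_S(w_k)+rN+s$, runs the automaton of Lemma~\ref{n_plus_k} on $(w_k,v)^\#$, and verifies $w_j \geq v$ in radix order; this guessing step is precisely the existential quantifier hidden in your composition of the translation-by-$q$ relation with the order relation, which you justify instead by closure of synchronous relations under composition \cite{FS93}. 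Your treatment is more structural and at one point genuinely more elementary: where the paper asserts that $\{(\rep_S(m),\rep_S(n))^\# : n-m \equiv s \pmod r\}$ is regular ``using the same ideas as'' \cite[Theorem~3.3.1]{CANT10b}, you observe that this relation is a finite union of Cartesian products of congruence classes, so its recognizability follows directly from Theorem~\ref{lec_rig} and closure under products. The closure properties you invoke (finite Boolean combinations, products with $\N$, coordinate permutations) are exactly what the paper uses implicitly when its automaton $\mathcal{M}$ simulates $\mathcal{T}$ on the free coordinates, and you correctly identify the padding in $(\cdot)^\#$ as the only point there requiring care. Finally, the paper writes out only forms \eqref{only_j} and \eqref{j_and_k} and declares the finite variants ``similar,'' whereas your uniform formulation in terms of the set $P$ of admissible values (or differences) covers all four forms at once.
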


\begin{proof}
We will give the proof for the cases where $X$ is either of the form
\eqref{only_j} or \eqref{j_and_k} (the other two cases are similar).

Let $S = (L,\Sigma,<)$ be an abstract numeration system and let
$\mathcal{T}$ be a finite automaton accepting $L$.  Let $R =
\rep_S(X)$.  We will show that $R^\#$ is regular.  That is, we will
define a (nondeterministic) finite automaton $\mathcal{M}$ that
accepts $R^\#$.  Let $(w_1,\ldots,w_d)^\#$ be an
arbitrary input to the automaton $\mathcal{M}$.

Suppose that $X$ is of the form \eqref{only_j}.  That is,
\[
X = \left\{ \sum_{\substack{i=1\\i \neq j}}^d n_i {\bf e}_i + (rn_j +
  s){\bf e}_j : n_1, \ldots, n_d \in \N, n_j \geq N \right\},
\]
where $1 \leq j \leq d$, and $r,s,N \in \N$.  Suppose first that
$r=0$.  In this case, the automaton $\mathcal{M}$ simulates
$\mathcal{T}$ on $w_1,\ldots,w_{j-1},w_{j+1},\ldots,w_d$.
The automaton $\mathcal{M}$ accepts its input if and only
if $\mathcal{T}$ accepts $w_1,\ldots,w_{j-1},w_{j+1},\ldots,w_d$ and
$w_j = \rep_S(s)$.

Now suppose that $r>0$.  By increasing the value of $N$, we may,
without loss of generality, assume that $s<r$.  By
\cite[Theorem~4]{LR01} (see also \cite[Theorem~3.3.1]{CANT10b}), the
language $\{\rep_S(rn_j+s) : n_j \in \N\}$ is regular, and hence the
language $\{\rep_S(rn_j+s) : n_j \geq N\}$ is also regular (since it
differs from the former only by a finite set).  Let $\mathcal{T}'$ be
an automaton accepting $\{\rep_S(rn_j+s) : n_j \geq N\}$.  As before, the
automaton $\mathcal{M}$ simulates $\mathcal{T}$ on
$w_1,\ldots,w_{j-1},w_{j+1},\ldots,w_d$, but now also simulates
$\mathcal{T}'$ on $w_j$.  The automaton $\mathcal{M}$ accepts its
input if and only if $\mathcal{T}$ accepts
$w_1,\ldots,w_{j-1},w_{j+1},\ldots,w_d$ and $\mathcal{T}'$ accepts
$w_j$.

Next suppose that $X$ is of the form \eqref{j_and_k}.  That is,
\[
\left\{ \sum_{\substack{i=1\\i \neq j}}^d n_i {\bf e}_i + (n_k + rn_j +
  s){\bf e}_j : n_1, \ldots, n_d \in \N , n_j \geq N \right\},
\]
where $1 \leq j,k \leq d$, $j \neq k$, and $r,s,N \in \N$.  Again,
suppose first that $r=0$.  By Lemma~\ref{n_plus_k}, the language
$\{(\rep_S(n_k),\rep_S(n_k+s))^\# : n_k \in \N\}$ is regular.  Let
$\mathcal{T}''$ be a finite automaton accepting this language.  The
automaton $\mathcal{M}$ simulates $\mathcal{T}$ on each of the words
in $\{w_1,\ldots,w_d\} \setminus \{w_j,w_k\}$.  Simultaneously, the
automaton $\mathcal{M}$ simulates $\mathcal{T}''$ on the pair $(w_k,
w_j)^\#$.  The automaton $\mathcal{M}$ accepts its input if and only
if $\mathcal{T}$ accepts $\{w_1,\ldots,w_d\} \setminus \{w_j,w_k\}$
and $\mathcal{T}''$ accepts $(w_k, w_j)^\#$.

Now suppose that $r>0$.  Again, without loss of generality, we may
assume that $s<r$.  Using the same ideas as in the proof of
\cite[Theorem~3.3.1]{CANT10b}, it is not hard to see that the language
\[
\{(\rep_S(m),\rep_S(n))^\# : m,n \in \N \text{ and } (n-m) \equiv s \pmod r\}
\]
is regular.  Let $\mathcal{Z}$ be an automaton accepting this language.  Let
$\mathcal{Z}'$ be an automaton accepting the language $\{(\rep_S(n_k),\rep_S(n_k+rN+s))^\#
: n_k \in \N\}$ (since $rN+s$ is a constant, we may apply
Lemma~\ref{n_plus_k}).

The automaton $\mathcal{M}$ simulates $\mathcal{T}$ on each of the
words in $\{w_1,\ldots,w_d\} \setminus \{w_j,w_k\}$.  Simultaneously,
the automaton $\mathcal{M}$ simulates $\mathcal{Z}$ on the pair
$(w_k,w_j)^\#$.

The automaton $\mathcal{M}$ also nondeterministically ``guesses'' a
word $v = b_1 \cdots b_{|v|}$ and simulates $\mathcal{Z}'$ on the pair
$(w_k,v)^\#$.  This ``guess'' works as follows.  Let $w_k = a_1 \cdots
a_{|w_k|}$, where each $a_i \in \Sigma$.  For each $i =
1,\ldots,|w_k|$, we simulate $\mathcal{Z}'$ by nondeterministically
choosing to follow one of the transitions of $\mathcal{Z}'$ labeled
$(a_i,b_i)$, where $b_i \in \Sigma$; and for $i > |w_k|$ (i.e., $w_k$
has been completely read), the simulation may make a nondeterministic
choice among transitions of the form $(\#,b_i)$, where $b_i \in
\Sigma$.  This nondeterministic choice of $b_i$ at each step of the
simulation is what defines the ``guessed'' word $v$.  Note that if
$\mathcal{Z}'$ accepts $(w_k,v)^\#$, then $\val_s(v) =
\val_S(w_k)+rN+s$.  As this nondeterministic simulation is performed,
the automaton $\mathcal{M}$ also simultaneously verifies that $w_j$ is
greater than or equal to (in the radix order) the guessed word $v$.

The automaton $\mathcal{M}$ accepts its input if and only if
\begin{itemize}
\item $\mathcal{T}$ accepts each of the
words in $\{w_1,\ldots,w_d\} \setminus \{w_j,w_k\}$,

\item $\mathcal{Z}$ accepts $(w_k,w_j)^\#$,

\item $\mathcal{Z}'$ accepts $(w_k,v)^\#$ for some guessed word $v$ as described
  above, and

\item $w_j$ is greater than or equal to $v$ in the radix order.
\end{itemize}
The last three of these conditions guarantee that $\val_S(w_j) =
\val_S(w_k) + rn_j + s$ for some $n_j \geq N$.

This completes the proof for the cases where $X$ is either of the form
\eqref{only_j} or \eqref{j_and_k}.  As previously stated, we omit the
details for the other two cases since they are similar.
\end{proof}

We are ready for the proof of Theorem~\ref{main}.

\begin{proof}[Proof of Theorem~\ref{main}]
One direction is clear: if $X$ is $S$-recognizable for all abstract
numeration systems $S$, then it is certainly $1$-recognizable.

To prove the other direction, suppose that $X$ is $1$-recognizable.
The result now follows from Lemmas~\ref{ugly_lemma},
\ref{pairwise_comparison}, and \ref{S-recog}.
\end{proof}

\section*{Acknowledgments}
The work contained in the paper came about in response to a question
posed by Jacques Sakarovitch during Michel Rigo's presentation of his
thesis required for the ``habilitation \`a diriger des recherches'' in
France.  We thank Jacques Sakarovitch for his question and we thank
Michel Rigo for presenting the problem to us.

\end{document}